\documentclass{article}
\usepackage{authblk}

\usepackage{Definitions}
\usepackage{tcolorbox}

\usepackage{fullpage}
\usepackage[numbers, sort&compress]{natbib}

\usepackage{graphicx} 

\title{On the Hardness of Approximation of the Fair $k$-Center Problem}
\author[1]{Suhas Thejaswi}
\affil[1]{Department of Computer Science, Aalto University, Finland}

\date{}

\begin{document}

\maketitle

\begin{abstract}
In this work, we study the hardness of approximation of the fair $k$-center problem. In this problem, we are given a set of data points in a metric space that is partitioned into groups and the task is to choose a subset of $k$-data points, called centers, such that a prescribed number of data points from each group are chosen while minimizing the maximum distance from any point to its closest center. Although a polynomial-time $3$-approximation is known for fair $k$-center in general metrics, it has remained open whether this approximation guarantee is tight or could be further improved, especially since the classical unconstrained $k$-center problem admits a polynomial-time factor-$2$ approximation. We resolve this open question by proving that, assuming $\p \neq \np$, for any $\epsilon>0$, no polynomial-time algorithm can approximate fair $k$-center to $(3-\epsilon)$-factor.

Our inapproximability results hold even when only two disjoint groups are present and at least one center must be chosen from each group. Further, it extends to the canonical one-per-group setting with $k$-groups (for arbitrary $k$), where exactly one center must be selected from each group. Consequently, the factor-$3$ barrier for fair $k$-center in general metric spaces is inherent, and existing $3$-approximation algorithms are optimal up to lower-order terms even in these restricted regimes. This result stands in sharp contrast to the $k$-supplier formulation, where both the unconstrained and fair variants admit factor-$3$ approximation in polynomial time.
\end{abstract}

\section{Introduction} \label{sec:introduction}

Clustering is a fundamental problem in data analysis and optimization, with applications ranging from uncovering structure in large datasets to selecting representatives for data summarization, and supporting facility-location decisions~\citep{kleindessner2019fair,chen2016matroid,shmoys1994computing,hajiaghayi2012local,hochbaum1985best,jones2020fair}. In its standard form, clustering involves selecting $k$ representative points---called centers---and assigning every data point to a chosen center, thereby inducing a partition of the data into $k$ clusters. The quality of the resulting solution is measured by an objective that quantifies how well the centers ``cover'' the data points.
A particularly well-studied objective is the $k$-center problem, which seeks $k$ centers that minimize the maximum distance from any point to its nearest center. This min-max distance criterion captures a natural ``worst-case'' notion of quality---closely related to an egalitarian fairness principle, since it focuses on protecting the most poorly served point. Though the problem is \np-hard, it admits a simple and practical constant-factor approximation algorithm: starting with a random point as a center and greedily choosing farthest point from the chosen centers (for $k-1$ iterations) achieves a $2$-approximation and this guarantee is essentially tight, assuming $\p \neq \np$~\citep{gonzalez1985clustering,hochbaum1985best}. Owing to its practicality, $k$-center has become a canonical testbed for understanding both the algorithmic possibilities and the inherent limitations of center-based clustering.

When data points are associated with sensitive attributes that induce demographic groups, and clustering is used to select representatives---for example, when the chosen centers are presented as a summary of the data, or used to serve as exemplars for downstream decision-making---optimizing a standard clustering objective alone can yield biased representation for certain groups~\cite{brubach2022fairness,chhabra2021anoverview}. A growing body of work has therefore investigated fair clustering that explicitly encode desiderata related to sensitive attributes via explicit representation constraints~\cite{thejaswi2021diversity, thejaswi2024diversity, thejaswi2022clustering,gadekar2025fair,kleindessner2019fair,jones2020fair,hajiaghayi2012local,krishnaswamy2011matroid}. 
A particularly natural notion for data summarization is representative (or center-based) fairness, which requires the set of chosen centers to include a prescribed numbers of representatives from each group, termed as the fair $k$-center problem~\citep{kleindessner2019fair,jones2020fair}. 
We emphasize that (fair) data summarization is studied in two closely related formulations: $k$-center and $k$-supplier objectives. In the (fair) $k$-center setting, any data point can be chosen as a center and every point is also a client that must be served. In the (fair) $k$-supplier setting, centers must be chosen from a designated subset of points, called suppliers, which may differ from the client set. This distinction is subtle but consequential: in (fair) $k$-supplier, suppliers are not required to be served by the opened centers, whereas in (fair) $k$-center every data point, including potential center locations, must be served~\cite{gadekar2025fair,chen2024approximation,gonzalez1985clustering,hochbaum1986unified}.

On the algorithmic side, fair $k$-center admits constant-factor approximation algorithms in general metrics. The current best-known approximation guarantee is factor-$3$: first compute a ``fairness-oblivious'' set of candidate centers using Gonzalez's greedy procedure achieving factor-$2$ approximation, and then project these candidates to a set of ``fair centers'' that satisfy the group requirements via a matching-based step. The analysis bounds the extra cost incurred during this projection, yielding an overall $3$-approximation~\citep{jones2020fair}.
Strikingly, the picture has remained incomplete. While a factor-$3$ approximation is known for fair $k$-center, it was not known whether this guarantee is tight. This uncertainty is particularly salient when contrasted with the classical (unconstrained) $k$-center problem, which admits a polynomial-time $2$-approximation and for which this factor-$2$ is essentially optimal. It is therefore natural to ask whether fairness constraints inherently force a strictly worse approximation factor, or whether we could hope to ``recover'' the factor-$2$ guarantee in the fair setting. The comparison to the $k$-supplier variant further sharpens the question: both unconstrained $k$-supplier and fair $k$-supplier admit polynomial-time $3$-approximation, and this factor is tight in both cases, so fairness does not create an approximation gap there. In contrast, for $k$-center an apparent gap persists between the $2$-approximation for the unconstrained variant and the $3$-approximation for the fair variant. The main goal of this work is to determine whether this gap can be bridged, \ie, whether the problem has a genuine computational barrier or this is an artifact of current algorithmic techniques employed to solve this problem.

Understanding these limitations are important not only from a theoretical stand point but also in practice. Hardness of approximation results clarify the algorithmic limits of optimization problems such as (fair) clustering: they delineate which improvements are achievable and help avoid chasing approximation guarantees that are provably unattainable. Equally important, they sharpen the research agenda by indicating where progress may still be possible---for example, by exploiting additional structure in metric spaces, \eg, Euclidean or doubling metrics, or by considering meaningful relaxations to fairness constraints. In this paper, we extend this line of research, by investigating the hardness of approximation of the fair $k$-center problem.

\medskip
\xhdr{Technical challenges}
Our hardness of approximation results build upon the inapproximability of the $k$-center problem with costs. Specifically, \citet{chuzoy2005asymmetric}[Theorem 4.1] show that $k$-center with costs is $\np$-hard to approximate within a factor of $3-\epsilon$, for any $\epsilon>0$. In this variant, each point has a nonnegative opening cost, and the total cost of chosen centers should not exceed a given budget.
The hardness of $k$-center with costs carries over to the closely related $k$-center with forbidden centers problem~\cite{angelidakis2021improved}, in which the set of data points $U$ is partitioned into two groups $A \cup B$ and it is allowed to open centers only in $A$, in other words, points in $B$ are forbidden as center locations.\footnote{Note that, in their $(3-\epsilon)$-inapproximability proof for $k$-center with costs, \citet{chuzoy2005asymmetric}[Theorem 4.1] assign unit cost to the ``regular'' points and prohibitively large costs (larger than $k$ suffices) to a designated subset of points. This choice of cost assignment effectively rules out those points with larger costs as possible centers while still requiring them to be covered by an opened center. Consequently, their construction can be interpreted as an instance of $k$-center with forbidden centers: some points must be served, but are not allowed to be selected as centers. \citet{angelidakis2021improved} makes this connection explicit. Citing \citet{chuzoy2005asymmetric}, they note that $k$-center with forbidden centers is $\np$-hard to approximate to a factor $(3-\epsilon)$, and argue that $k$-center with forbidden centers is a special case of the matroid center problem. Consequently, the same $(3-\epsilon)$ inapproximability transfers to matroid center as well. Although their discussion is framed around hardness of approximation for matroid center, the underlying source of the barrier is precisely the $(3-\epsilon)$-inapproximability of $k$-center with forbidden centers (or costs) from \citet{chuzoy2005asymmetric}[Theorem 4.1].}
This formulation is different from $k$-supplier, where the facility set can be distinct from the client set, and suppliers themselves need not be served by any opened center. In contrast, in $k$-center with forbidden centers (equivalently, a ``$k$-supplier'' view with forbidden locations), points in $B$ are disallowed as center locations, but they remain part of the metric and therefore must still be assigned to some opened center in $A$.

Observe that $k$-center with forbidden centers is a special case of fair $k$-center, with two groups $G_1=A$, $G_2=B$ and requirements $r_1=k$, $r_2=0$. This implies an immediate $(3-\epsilon)$-inapproximability in the degenerate regime where some groups can have zero requirement. 
In contrast, many fair clustering settings assume that the data is partitioned into disjoint (demographic) groups $\GG=\{G_1,\dots,G_t\}$ with $G_i\cap G_j=\emptyset$ and $\bigcup_{i\in[t]} G_i = U$; for instance, in the two-group setting one may partition points by sex (\ie, \texttt{Male} and \texttt{Female}). In such scenarios, allowing $r_i=0$ effectively permits selecting no centers from an entire demographic group---an outcome that is usually inconsistent with the intended fairness goal and arguably too weak to justify inapproximability claims for a practically relevant problem, which extends beyond such degenerate cases. Most importantly, the above reduction does not establish inapproximability in the regime where every group must be represented among the centers, \ie, when the requirements are $1 \le r_i \le k-1$.

Moreover, several prior works~\cite{thejaswi2021diversity,thejaswi2022clustering,gadekar2025fair,thejaswi2024diversity,gadekar2025capacitated,cohenaddad2019on,gadekar2025faircommittee} use a common reduction paradigm: they transform fair clustering instances with arbitrary group requirements into a simpler canonical form consisting of exactly $k$ disjoint groups, with the constraint that the solution must select exactly one center from each group. Many fairness constraints and clustering variants admit such a reduction, so an efficient algorithm for this ``one-per-group'' setting would immediately yield efficient algorithms for the more general formulations.
This approach has been particularly powerful for objectives such as $k$-median, $k$-means, and $k$-supplier: after reducing to the one-per-group case, one can obtain approximation guarantees that match those of the corresponding unconstrained problems. This raises a natural question.

\begin{tcolorbox}
\textbf{Question:} \emph{Is fair $k$-center $\np$-hard to approximate within a factor strictly better than $3$ in polynomial time when the fairness constraints are non-degenerate---specifically, when exactly one center must be chosen from each group?}
\end{tcolorbox}

\medskip
\xhdr{Our contributions}
We establish hardness of approximation for fair $k$-center and resolve an open problem, by showing that the factor-$3$ approximation achievable in polynomial time is essentially optimal. Our results cover two non-degenerative cases: ($i$) instances with two groups, where at least one center must be selected from each group, and ($ii$) instances with an arbitrary number of groups in the one-per-group regime, where exactly one center must be selected from each group.

\medskip
The remainder of the paper is organized as follows. In \Cref{sec:problem} we introduce the problem definitions. \Cref{sec:hardness} presents our hardness of approximation results. \Cref{sec:related} reviews the most relevant related work, and \Cref{sec:conclusions} concludes with a discussion of implications of our inapproximability results and possible directions of future work.

\section{Problem Definitions} \label{sec:problem}

Before proceeding further, let us formally define the problems relevant to our work. We begin with the definition of the fair $k$-center problem.

\begin{definition}[The fair $k$-center problem]
Given a metric space $(U,d)$ with $|U|=n$ data points partitioned into $t$ groups $\GG=\{G_i\}_{i \in [t]}$ and a requirement vector $\rvec = \{r_i\}_{i \in [t]}$ such that $\sum_{i \in [t]} r_i = k$. The task is to choose a subset $S \subseteq U$ of $k$ points such that $|S \cap G_i| = r_i$ while minimizing the maximum distance from any point to its closest center in $S$ \ie,
$$\text{minimize} \quad \cost(S, U):= \max_{u \in U} \min_{s \in S} d(u,s).$$
\end{definition}

We also establish hardness of approximation for a variant of fair $k$-center, namely the one-per-group $k$-center problem, which is defined as follows.

\begin{definition}[The one-per-group fair $k$-center problem]
This is a special case of fair $k$-center in which the data points are partitioned into exactly $k$ disjoint groups $\GG=\{G_1,\dots,G_k\}$ and the requirement vector is $\rvec=\vec{1}_k$, \ie, the solution must select exactly one center from each group.
\end{definition}

Our hardness of approximation results are established via a reduction from the $k$-center problem with forbidden centers, which we define next.

\begin{definition}[The $k$-center with forbidden centers problem]
Given a metric space $(U,d)$ whose $n=|U|$ points are partitioned into two sets $A$ and $B$, and an integer $k$, the goal is to choose a set of centers $S \subseteq A$ with $|S|=k$ so as to minimize the maximum distance from any point to its nearest chosen center, \ie,
\[
\text{minimize} \quad \cost(S, U) := \max_{u \in U}\; \min_{s \in S} d(u,s).
\]
\end{definition}

\section{Hardness of Approximation} \label{sec:hardness}

In this section, we present our hardness of approximation results. We first show that fair $k$-center with two groups remains $\np$-hard to approximate within a factor of $3-\epsilon$, for any $\epsilon>0$, even when the requirements are non-degenerate, \ie, when at least one center must be selected from each group. We then extend this hardness to the canonical one-per-group regime by transforming any fair $k$-center with two groups instance into an equivalent instance with $k$ groups with a requirement of choosing exactly one-center per group. The transformation duplicates points to create the desired number of groups, and we preserve the metric structure by introducing a small separation $\delta>0$ between duplicates. By choosing $\delta$ appropriately, we ensure that, for every $\epsilon>0$, a $(3-\epsilon)$-approximation for the one-per-group variant would imply a $(3-\epsilon)$-approximation for the two-group case, and hence such an approximation algorithm cannot exist.

\subsection{Two groups regime}
\begin{theorem}\label{theorem:hardness_twogroups}
Assuming $\p \neq \np$, for any $\epsilon > 0$, there exists no polynomial-time $(3-\epsilon)$-approximation algorithm for the fair $k$-center problem even when there are only two groups and at least one center must be selected from each group.
\end{theorem}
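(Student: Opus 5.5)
We plan to reduce from $k$-center with forbidden centers, which, as discussed in the introduction, is $\np$-hard to approximate within $3-\epsilon$ (via \citet{chuzoy2005asymmetric}[Theorem 4.1] and \citet{angelidakis2021improved}). Given an instance $(U=A\cup B,d,k)$, we build a fair $k$-center instance with two groups whose optimum equals the original optimum (up to an additive term we can make negligible). We also require that any fair solution can be converted in polynomial time into a forbidden-centers solution of no larger cost. Then a $(3-\epsilon)$-approximation for the two-group fair problem would yield a $(3-\epsilon')$-approximation for forbidden centers.

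\textbf{Construction.} Let $G_1=A$ and $G_2=B\cup\{z\}$, where $z$ is a new point, and set $r_1=k$, $r_2=1$, so the budget is $k+1$. The intent is that the single mandatory center in $G_2$ is forced to be $z$, or is useless anyway. The subtlety is that an arbitrary $b\in B$ opened as the $G_2$-center could help cover other points of $B$. To neutralize this, we place $z$ far from everything except a dedicated gadget, and we make $z$ itself require coverage. Concretely, we add a point $z$ at distance $D$ (huge) from every point of $U$, and define distances through $z$ as shortest paths, so the metric is preserved because $d(u,z)=D$ for all $u\in U$. Now $z$ must be covered. Any solution of cost $<D$ must either open $z$ or open nothing that covers it. Since $z$ is only close to itself, every solution of cost $<D$ opens $z$, and $z$ lies in $G_2$, which uses the $G_2$ requirement. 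The remaining $k$ centers must come from $G_1=A$, and $z$ covers no point of $U$ within distance $<D$.

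\textbf{Correctness.} We must check both directions. First, any forbidden-centers solution $S\subseteq A$ with $|S|=k$ gives the fair solution $S\cup\{z\}$ of the same cost, provided the cost is $<D$. Second, any fair solution of cost $<D$ has the form $S\cup\{z\}$ with $S\subseteq A$, $|S|=k$, and $\cost(S,U)$ is at most its cost. Choosing $D$ larger than $3\cdot\max_{u,v\in U}d(u,v)$ ensures that a $(3-\epsilon)$-approximate fair solution has cost $<D$, because the fair optimum is at most the diameter of $U$. Hence the optima coincide, and the approximation transfers with the same factor. The requirements satisfy $1\le r_i$, so the non-degenerate regime holds. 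If one insists on $r_i\le k-1$ within a total of $k'=k+1$, that condition is also met whenever $k\ge 1$.

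\textbf{Main obstacle.} The step needing care is verifying that adding $z$ keeps a valid metric and that no solution can profitably avoid opening $z$. The triangle inequality holds because $D\ge$ the diameter, so $d(u,v)\le 2D$ trivially and $D\le d(u,v)+D$. Solutions could avoid $z$ only by paying cost $\ge D$, which is excluded by the choice of $D$ above. I would also double-check that the hardness source indeed gives instances with $k\ge1$ and with $A,B$ nonempty. That is immediate from the construction in \citet{chuzoy2005asymmetric}, which completes the proof of \Cref{theorem:hardness_twogroups}.
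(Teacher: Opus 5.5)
Your proposal is correct and is essentially the paper's proof. You use the same two-group instance $G_1=A$, $G_2=B\cup\{z\}$ with $r_1=k$, $r_2=1$, where $z$ is an isolated point at distance more than three times the diameter of $U$ (the paper uses $3D+1$, with $D$ the diameter); hence the optima coincide, and any $(3-\epsilon)$-approximate fair solution must open $z$ and yields a forbidden-centers solution once $z$ is discarded. Your justification that an approximate solution must contain $z$, namely that its cost is below the distance to $z$, is in fact the right argument, and it is cleaner than the paper's assertion that every feasible solution contains $x$.
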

\begin{proof}
We prove the claimed inapproximability result via a polynomial-time reduction from $k$-center with forbidden centers to fair $k$-center with two groups. 
%
Our argument proceeds in three steps.
($i$) We construct, in polynomial time, a fair $k$-center instance with two groups by adding a single auxiliary point $x$ and defining its distances to all other points so that $x$ is effectively isolated (\Cref{claim:reduction}).
($ii$) We prove that every optimal solution to the resulting fair instance must include $x$ as a center (\Cref{claim:optx}), and that removing $x$ preserves the objective value on the remaining points, establishing an equivalence between optimal solutions of the two problems (\Cref{claim:sameopt}).
($iii$) We show that any $\alpha$-approximate solution for the fair $k$-center with two groups instance can be converted into an $\alpha$-approximate solution for $k$-center with forbidden centers instance by simply discarding $x$ (\Cref{claim:alpha_approx}).
Combining these steps with the known $(3-\epsilon)$-inapproximability of $k$-center with forbidden centers yields our hardness of approximation for fair $k$-center with two groups.

\begin{claim}\label{claim:reduction}
    An instance of the $k$-center problem with forbidden centers can be transformed into an equivalent instance of the fair $k$-center problem with two groups in polynomial time.
\end{claim}
\begin{proof}
Let $I=((U,d),A,F,k)$ be an instance of $k$-center with forbidden centers. We construct an instance
$I'=\bigl((U',d'),\GG=\{G_1,G_2\},\rvec=(r_1,r_2),k'\bigr)$
of fair $k$-center with two groups as follows. Introduce a new point $x$ and set
$U' := U \cup \{x\}$, $k' := k+1.$
Define two groups 
$G_1 := A, G_2 := F \cup \{x\}$,
and set the group requirements to
$ r_1 := k, r_2 := 1$.
To define the metric, let
$D := \max_{a,b\in U} d(a,b)$, and set $d'$ on $U'$ by
\begin{align*}
    d'(u,v) &:= d(u,v) &&\text{for all } u,v\in U,\\
    d'(x,x) &:= 0,\\
    d'(u,x) \;=\; d'(x,u) &:= 3D+1 &&\text{for all } u\in U.
\end{align*}
This completes the construction of $I'$. 

Since $(U,d)$ is a metric, $d’$ inherits non-negativity, symmetry, and the triangle inequality for all pairs of points in $U$ (because $d’(u,v)=d(u,v)$ for $u,v\in U$). Moreover, by definition, $d’(x,x)=0$ and for every $u \in U$ we have
$d'(x,u) = d'(u,x) = 3D+1 > 0$,
so non-negativity and symmetry also hold on all pairs involving $x$.
It remains to verify the triangle inequality for triples that include $x$. Fix any $u,v\in U$ (so $u,v\in U’\setminus{x}$). Then
\[
d'(u,x)+d'(x,v) = (3D+1) + (3D+1) = 6D+2.
\]
On the other hand, since $D:=\max_{a,b\in U} d(a,b)$, we have $d'(u,v) = d(u,v) \le D$. Hence,
\begin{align*}
d'(u,v)\;&\le\; D \;<\; 6D+2 \;=\; d'(u,x)+d'(x,v),\\
d'(u,v)\;&\le d'(u,x)+d'(x,v).
\end{align*}
The other two inequalities, $d'(u,x)\le d'(u,v)+d'(v,x)$ and $d'(x,v)\le d'(x,u)+d'(u,v)$, follow immediately since the right-hand side in each case is at least $d'(v,x)=3D+1$ or $d'(x,u)=3D+1$, respectively. Therefore, $d'$ satisfies the triangle inequality on $U'$, and is a metric.

Since $|U'| = |U| + 1$, and all distances can be computed in polynomial time, hence the reduction can be done in polynomial time.
\end{proof}

\begin{claim}\label{claim:optx}
Every optimal solution $S'$ to the fair $k$-center instance must include $x$, \ie, $x\in S'$. Furthermore any optimal solution for fair $k$-center must have cost at most $D$.
\end{claim}
\begin{proof}
    We prove this by contradiction. Suppose that $S'$ is an optimal fair solution with $x\notin S'$. By construction, $d(x,x)=0$, while for every $u\in U\setminus\{x\}$ we have
    $ d(x,u)=3D+1.$
    Since $x$ is not opened as a center, $x$ must be assigned to some center in $S' \subseteq U' \setminus \{x\}$, incurring cost $3D+1$.

    On the other hand, the instance admits a feasible fair solution of cost $D$ (\eg, by selecting any $k$ centers from $G_1$ together with $x$ in $G_2$), and therefore the optimum value is at most $D$. Now replace an arbitrary center $s\in S' \cup G_2 \setminus \{x\}$ by $x$, and consider
    \[
    S'' := S' \setminus \{s\} \cup \{x\}.
    \]
    In $S''$, the point $x$ is served at distance $0$, and the remaining points can be assigned exactly as in $S'$ except possibly those previously assigned to $s$, which can be reassigned without increasing the maximum distance beyond $D$ (since an optimal solution of value at most $D$ exists). Consequently, $S''$ has objective value at most $D$, strictly improving upon the assumed cost of $S'$ (which is at least $3D+1$), contradicting the optimality of $S'$. Therefore, every optimal solution of instance $I'$ must contain $x$.
\end{proof}

\begin{claim}\label{claim:sameopt}
    Any optimal solution for the fair $k$-center instance can be transformed into an optimal solution for the $k$-center with forbidden centers instance and vice versa, without changing the objective function value.
\end{claim}
\begin{proof}
    Let $S^*$ be an optimal solution to fair $k$-center with two groups instance $I'$. By \Cref{claim:optx}, every optimal solution for $I'$ must include $x$, \ie, $x\in S^*$. Moreover, since the group requirement for $G_2$ is $1$, the point $x$ is the unique center selected from $G_2$, and the remaining $k$ centers are selected from $G_1$ (which correspond to the points in $A$).
    Next observe that, any point $u\in U'\setminus\{x\}$ assigned to $x$ would incur distance $d(u,x)=3D+1$. By construction, this value is strictly larger than the distance from $u$ to any feasible center in $G_1$ which is at most $D$. Hence, in an optimal solution, no point other than $x$ itself is assigned to $x$; the center $x$ serves only $x$. It follows that the optimal cost of fair $k$-center is determined entirely by the $k$ centers in $G_1$, and we have
    \[
    \opt(I') = \cost(S^*,U') \;=\; \cost\bigl(S^*\setminus\{x\},\, U'\setminus\{x\}\bigr) = \opt(I).
    \]
    Therefore, letting $\widehat S := S^*\setminus\{x\}$ yields a feasible solution for $k$-center with forbidden centers instance $I$, with the same objective value as $S^*$. Since both instances optimize the same objective over $U'\setminus\{x\}$, the centers $S^*=S^*\setminus \{x\}$ is optimal solution for the $k$-center with forbidden centers instance $I$.
    
    For the other direction, let $\widehat{S}$ be an optimal solution for $k$-center with forbidden centers instance $I'$, and define $S^* := \widehat S\cup\{x\}$. Then $S'$ is feasible for the fair $k$-center instance (it selects exactly one center from $G_2$ and $k$ centers from $G_1$). By \Cref{claim:optx}, any optimal fair solution must contain $x$; omitting $x$ would force $x$ to be served by some other center, which (by the construction) would determine the cost. As argued above, adding $x$ does not change the maximum distance for the remaining points, so $S^*$ optimizes for the same objective as $\Swhat$ and is therefore optimal for the fair $k$-center with two groups instance $I'$. This implies that, $\opt(I) = \opt(I')$.
\end{proof}

\begin{claim}\label{claim:alpha_approx}
Any $\alpha$-approximation algorithm for fair $k$-center with two groups instance yields an $\alpha$-approximation algorithm for $k$-center with forbidden centers instance (via the reduction described in \Cref{claim:reduction}).
\end{claim}
\begin{proof}
Let $S'$ be the set of centers returned by an $\alpha$-approximation algorithm for the constructed fair instance $I'$. By \Cref{claim:optx}, every feasible (and hence every approximate) solution to $I'$ must include $x$, so $x\in S$. Define $S := S' \setminus \{x\}$.
Then $S \subseteq A$ and $|S|=k$, so $S$ is feasible for $k$-center with forbidden-centers instance $I$.
By \Cref{claim:sameopt}, the objective values coincide after removing $x$, namely
\[
\cost(S',U') \;=\; \cost(S,\,U'\setminus\{x\}) \;=\; \cost(S,U),
\]
and moreover $\opt(I)=\opt(I')$. Since $S'$ is an $\alpha$-approximate solution for $I'$,
\[
\cost(S',U') = \cost(S,I) \;\le\; \alpha\cdot \opt(I').
\]
Combining the above inequality with \Cref{claim:optx}, which states $\opt(I') = \opt(I)$, we obtain,
\[
\cost(S,U) \;\le\; \alpha\cdot \opt(I),
\]
so $S$ is an $\alpha$-approximation for the $k$-center problem with forbidden centers.
\end{proof}

For the $k$-center problem with forbidden centers, it is known that, unless $\p=\np$, no polynomial-time $(3-\epsilon)$-approximation algorithm exists, for any $\epsilon>0$. Together with \Cref{claim:alpha_approx}, this implies that a $(3-\epsilon)$-approximation for fair $k$-center would immediately yield a $(3-\epsilon)$-approximation for $k$-center with forbidden centers, contradicting the known hardness of approximation result of $k$-center with forbidden centers. Therefore, unless $\p=\np$, fair $k$-center with two groups cannot be approximated within a factor strictly better than $3$ in polynomial time, and hence the known factor-$3$ approximation guarantee is tight.
\end{proof}

\begin{remark}
Note that the inapproximability result of \Cref{theorem:hardness_twogroups} extends to fair $k$-center with two groups with an arbitrary (non-zero) requirement for each group. The only modification in the construction of \Cref{claim:reduction} is to introduce $r_2 \in \{1,\dots, k-1\}$ auxiliary points $\{x_1,\dots,x_{r_2}\}$ and place them in group $G_2$ (\ie, $G_2=B \cup \{x_1,\dots,x_{r_2}\}$), $U' = U \cup \{x_1,\dots,x_{r_2}\}$, and then set $r_1=k$, and $k' = k + r_2$. Furthermore, set distances in $d'$ as follows: 
\begin{align*}
d'(x_i,x_i) &=0, \quad \text{for all}\quad x_i \in \{x_1,\dots,x_{r_2}\},\\
d'(u,v) &=d(u,v), \quad \text{for all} \quad u,v \in U,\\
d'(x_i,u) = d'(u,x_i) &= 3D+1, \quad \text{for each}\quad x_i \in \{x_1,\dots,x_{r_2}\},\; \text{and}\quad u \in U' \setminus \{x_i\}.
\end{align*}
The analysis then carries over: any feasible (and hence any optimal) solution must open points $\{x_1,\dots,x_{r_2}\}$ as centers in $G_2$, and they do not affect the cost of serving the points in $U$. Consequently, the $(3-\epsilon)$-inapproximability in polynomial time holds for fair $k$-center with two groups with requirements $1 \le r_1 \le k-1$ and $1 \le r_2 \le k-1$.
\end{remark}

\begin{remark}
We present our hardness of approximation result in the two-group regime to highlight that the $(3-\epsilon)$-inapproximability already holds when the number of groups is small. Moreover, such two-group instances can be transformed into equivalent instances with more groups (up to $k$) by duplicating points and adjusting the per-group requirements accordingly. Hence, the same $(3-\epsilon)$-inapproximability extends to settings with an arbitrary number of groups; in this sense, the two-group regime is a strictly restrictive case. This transformation using duplication step is exactly what we use to establish the hardness result for the one-per-group fair $k$-center problem, which we explain in the subsection to follow.
\end{remark}

\subsection{One-per-group regime}

Next we will establish the hardness of approximation for the one-per-group fair $k$-center problem. Our result is stated in the following theorem.

\begin{theorem}
Assuming $\p \neq \np$, for any $\epsilon > 0$, there exists no polynomial-time $(3-\epsilon)$-approximation algorithm for the fair $k$-center problem even when exactly one center must be selected from each group.
\end{theorem}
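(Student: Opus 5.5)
We reduce from $k$-center with forbidden centers, which is $(3-\epsilon)$-hard. Rather than going through the two-group instance of \Cref{theorem:hardness_twogroups} and duplicating points with a small separation $\delta$, we take a more direct route. Duplicating $A$ and giving the copies a distance $\delta$ would distort costs by an additive $\delta$, and $\delta$ would then have to be small relative to $\opt$. That is awkward if $\opt=0$ is possible, although the gap instances of \citet{chuzoy2005asymmetric} have $\opt$ equal to $1$ versus at least $3$, so any fixed $\delta<\epsilon$ would also work. The direct construction below avoids this issue entirely.

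Given $I=((U,d),A,F,k)$, let $D:=\max_{a,b\in U}d(a,b)$. We build $I'$ on the point set
\[
U' := \bigcup_{j\in[k]} A^{(j)} \cup F \cup \{x_1,\dots,x_{k}\},
\]
where $A^{(j)}=\{a^{(j)}: a\in A\}$ is the $j$-th copy of $A$. The groups are $G_j := A^{(j)}\cup\{x_j\}$ for $j\in[k]$, and $F$ is attached to $G_1$ (we never want centers there; see below). The requirement is one center per group, so $k'=k$.

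For distances, we put $d'(a^{(i)},b^{(j)}) := d(a,b)$ and $d'(a^{(i)},f):=d(a,f)$. This is only a pseudometric, since copies are at distance $0$, so we either identify copies or add a tiny $\delta$ as the paper suggests. We set $d'(x_j,\cdot):=3D+1$ for all other points. The isolated points $x_j$, however, now hurt us: each $x_j$ must be covered, which forces every group to pick $x_j$, and then nothing serves $U$.

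We therefore drop the $x_j$ and use $G_j:=A^{(j)}$, putting $F$ into $G_1$. This is the main obstacle: forbidden points must lie in some group, yet selecting them must be useless or costly. For that we use the gap structure. In the hard instances, a point of $F$ opened as a center can be replaced by an arbitrary $A$-neighbour while increasing the radius by at most $\opt$, so we cannot get an exact equivalence this way. The cleaner option is to require that $F$ be forbidden via a dedicated group. We reinstate a single isolated point $x$ with group $G_0:=F\cup\{x\}$, exactly as in \Cref{claim:reduction}, plus $k$ copy-groups $G_j:=A^{(j)}$, for a total of $k+1$ groups with one center each.

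The argument then has three steps. First, as in \Cref{claim:optx}, any solution of cost at most $3D$ must open $x$ from $G_0$, since otherwise $x$ pays $3D+1$, while $\opt(I')\le D$. Second, any choice $\{a_1^{(1)},\dots,a_k^{(k)}\}$ corresponds to a multiset of at most $k$ centers in $A$. We pad it with arbitrary extra $A$-points, which only lowers the cost, and conversely any $k$-subset of $A$ assigns one point per copy. Copies at distance $\delta$ change every cost by at most $\delta$, so $|\opt(I')-\opt(I)|\le\delta$ and $\cost(S,U)\le \cost(S',U')+\delta$. Third, choosing $\delta$ below $\epsilon/2$ times the minimum positive distance, and handling $\opt(I)=0$ by a trivial check, shows that a $(3-\epsilon)$-approximation for $I'$ yields a $(3-\epsilon/2)$-approximation for $I$, which contradicts the hardness of $k$-center with forbidden centers. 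The delicate step is this $\delta$ bookkeeping: we must verify the triangle inequality for the perturbed metric (setting $d'(a^{(i)},b^{(j)})=d(a,b)+\delta$ for distinct copies works) and absorb the additive error multiplicatively.
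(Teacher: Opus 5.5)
Your final instance (the group $F\cup\{x\}$ together with $k$ copy-groups $A^{(1)},\dots,A^{(k)}$, one center per group) is exactly what the paper gets by applying its duplication reduction to the two-group instance of \Cref{claim:reduction} with $r_1=k$, $r_2=1$. So, despite the announced ``direct route'', you end up with the same $\delta$-separated copies, and the architecture is sound. Two steps of your analysis, however, fail as written.

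First, the perturbed distance you propose is not a metric. You set $d'(a^{(i)},b^{(j)})=d(a,b)+\delta$ for distinct copies but keep $d'(a^{(i)},f)=d(a,f)$ for $f\in F$. Whenever some $f\in F$ lies between $a,b\in A$, i.e.\ $d(a,f)+d(f,b)=d(a,b)$, you get $d'(a^{(i)},b^{(j)})=d(a,b)+\delta>d'(a^{(i)},f)+d'(f,b^{(j)})$. Three collinear points already do this, and nothing in a forbidden-centers instance rules it out. Adding $\delta$ to \emph{every} off-diagonal distance, including those to $F$ and $x$, would be a metric.

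Second, the bookkeeping does not reach a ratio below $3$. Your own bounds give
\[
\cost(S,U)\;\le\;(3-\epsilon)\bigl(\opt(I)+\delta\bigr)+\delta .
\]
With $\delta<(\epsilon/2)\Delta\le(\epsilon/2)\opt(I)$ this only yields $\cost(S,U)<(3+\epsilon-\epsilon^2/2)\,\opt(I)$, which is above $3\,\opt(I)$ for $\epsilon<2$. The additive error is multiplied by $3-\epsilon$, so you need something like $\delta\le\epsilon\Delta/8$. Note also that ``identifying copies'' is not available, because the copies must lie in distinct, disjoint groups.

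Both defects are repairable. The paper's repair also shows that your premise, that duplication distorts costs by an additive $\delta$, is unfounded, so no $\epsilon$-dependent $\delta$ is needed at all.
\begin{itemize}
\item Put distance $\delta=\Delta/2$ only between distinct copies of the same original point, and $d'(p,q)=d(\pi(p),\pi(q))$ otherwise. This is a metric because the only new triangle (two copies of $a$ and a point over $b\neq a$) requires just $\delta\le 2d(a,b)$.
\item In the forward direction, the only new distances are $\delta<\Delta$, while $\opt(I)\ge\Delta$ as soon as some point is not a center. So the cost is at most $\max\{\opt(I),\delta\}=\opt(I)$.
\item In the backward direction (distinct originals of the chosen copies, padded within $A$), nothing is lost. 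If $\pi(p)\neq\pi(q)$ then $d'(p,q)=d(\pi(p),\pi(q))$; otherwise the client's serving copy has the same original, so that original is itself a center.
\end{itemize}
Hence $\opt(I')=\opt(I)$ exactly, and an $\alpha$-approximation transfers with the same $\alpha$ (\Cref{claim:one_per_group_sameopt} and \Cref{claim:one_per_group_alphaapx}).
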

\begin{proof}
We prove the claimed inapproximability via a polynomial-time reduction from the fair $k$-center problem with two groups (already shown to be inapproximable to ($3-\epsilon$)-factor in \Cref{theorem:hardness_twogroups}) to an instance of one-per-group fair $k$-center. Concretely, we show that any instance with two disjoint groups and an arbitrary (non-degenerate) requirements $1 \leq r_1 \leq k-1$ and $1 \le r_2 \le k-1$ can be transformed, in polynomial time, into an equivalent instance of fair $k$-center consisting of exactly $k$ disjoint groups with the constraint that exactly one center must be chosen from each group. The reduction duplicates points to create the desired number of groups, and we preserve the metric structure by introducing a small separation $\delta > 0$ between duplicates. Our argument proceeds in four steps: ($i$) we construct the one-per-group fair $k$-center instance (\Cref{claim:one_per_group_reduction}), ($ii$) we prove that the constructed instance form a metric space (\Cref{claim:one_per_group_metric}), ($iii$) we show that the optimum objective value is preserved (\Cref{claim:one_per_group_sameopt}), and ($iv$) we show that any $\alpha$-approximate solution to the one-per-group instance yields an $\alpha$-approximate solution to the fair $k$-center with two groups instance (\Cref{claim:one_per_group_alphaapx}). Combining these results with \Cref{theorem:hardness_twogroups} yields the claimed hardness of approximation result.

\begin{claim}\label{claim:one_per_group_reduction}
Any fair $k$-center with two groups instance can be reduced, in polynomial time, into an equivalent instance of one-per-group fair $k$-center.
\end{claim}
\begin{proof}
Let
$I = \bigl( (U,d), \; \GG=\{G_1,G_2\}, \; \rvec=(r_1,r_2), \; k \bigr)$ 
be an instance of fair $k$-center with two disjoint groups $G_1$, $G_2$ such that $r_1+r_2$ = $k$ and $1 \le r_1,r_2 \le k-1$. We construct a one-per-group fair $k$-center instance
$I' = \bigl((U',d'), \; \HH, \; \vec{1}_k, \; k \bigr)$.
in which the points are partitioned into $k$ disjoint groups and exactly one center must be chosen from each group.
For our hardness result, it would already suffice to apply this transformation to the particular two-groups instance with $r_1=k$ and $r_2=1$ constructed in \Cref{claim:reduction}. Nevertheless, we describe the reduction for fair $k$-center with two groups with arbitrary requirements $1 \leq r_1,r_2 \leq k-1$, as it may be of independent interest. Our construction is as follows.
\squishlist
\item \textit{$k$ disjoint groups.} Create $r_1$ disjoint copies of $G_1$ denoted
$\{H^1_{1}$, $H^1_{2}$, \ldots, $H^1_{r_1}\}$, and $r_2$ disjoint copies of $G_2$ denoted
$\{H^2_{1}, H^2_{2}, \ldots, H^2_{r_2}\}$. Let $\HH=\{H^1_1,\dots,H^1_{r_1},H^2_{1},\dots,H^2_{r_2}\}$ be the family consisting of these $k = r_1+r_2$ groups. The requirement vector is $\rvec = \vec 1_{k}$, \ie, exactly one center must be selected from each group in $\HH$.

\item \textit{Point duplication.} For each $u \in G_1$ and each $a \in [r_1]$, create a distinct copy $u^1_{a}$ and place it in $H^1_{a}$. Similarly, for each $v \in G_2$ and each $b \in [r_2]$, create a distinct copy $v^2_{b}$ and place it in $H^2_{b}$. Let $U'=\bigcup_{H \in \HH} H$ denote the union of all such copies. By construction, the groups in $\HH$ are pairwise disjoint.

\item \textit{Distances.} Let $\Delta := \min\{ d(u,v) : u,v \in U,\; u \neq v\}$,
and set $\delta := \smallfrac{\Delta}{2}$. Define $d'$ on $U'$ as follows. For any two points $x,y \in U'$, let $\pi(x),\pi(y) \in U$ denote their corresponding original points in $U$.
\[
d'(x,y) :=
\begin{cases}
0 & \text{if } x=y,\\
\delta & \text{if } x \neq y \text{ and } \pi(x)=\pi(y),\\
d(\pi(x),\pi(y)) & \text{if } \pi(x) \neq \pi(y).
\end{cases}
\]
\squishend

This completes the construction of $I'$. Since $|U'| \leq k \cdot |U|$ and the distances are computable in polynomial time, hence the reduction is polynomial.
\end{proof}

\begin{claim}\label{claim:one_per_group_metric}
The distance function $d'$ constructed in \Cref{claim:one_per_group_reduction} satisfy the metric properties on $U'$.
\end{claim}
\begin{proof}
Symmetry holds because $d$ is a metric and it holds from the construction that $d'(x,y)=d'(y,x)$ for all $x,y \in U'$. Non-negativity is immediate, since $d'(x,y)=0$ iff $x=y$, and $d'(x,y)\geq \delta$ and $\delta>0$ for all $x \neq y, x,y \in U'$. It remains to verify the triangle inequality. Fix any $x,y,z \in U'$ and let $a=\pi(x)$, $b=\pi(y)$, $c=\pi(z)$ be their original points in $U$. We consider the following cases:
\squishlist
\item[($i$)] If $a,b,c$ are all distinct, then $d'(x,y)=d(a,b)$, $d'(y,z)=d(b,c)$, and $d'(x,z)=d(a,c)$. So the triangle inequality of $(U',d')$ follows directly from the triangle inequality in $(U,d)$.

\item[($ii$)]  If $a=b=c$, then all non-zero pairwise distances among $x,y,z$ equal $\delta$, and it holds that
\begin{align*}
\delta  + \delta &> \delta, \\
d'(x,y) + d'(y,z) &> d'(x,z).
\end{align*}

\item[($iii$)] If exactly two points in $a,b,c$ are the same. Without loss of generality assume that $a=c \neq b$, which implies $d(a,c)=0$ and $d(a,b)=d(c,b) \geq \Delta$. Furthermore, $d'(x,y)=d'(z,y)=d(a,b)=d(c,b)$. The first triangle inequality holds as follows,
%
%
\begin{align*}
d'(x,y) + d'(y,z) &= 2 \cdot d(a,b) \ge 2 \cdot \Delta > \Delta > \delta = d'(x,z), \\
d'(x,y)+d'(y,z) &> d'(x,z).
\end{align*}
The next triangle inequality holds as follows.
\begin{align*}
d'(x,z) + d'(z,y) &= \delta + d(c,b) \\
&= \delta + d(a,b) \\
&= \delta + d'(x,y) \\
d'(x,z) + d'(z,y) &> d'(x,y).
\end{align*}
The final triangle inequality holds because of a similar argument to the previous.
\begin{align*}
d'(y,x) + d'(x,z) &= d(b,a) + \delta \\
&= d(b,c) + \delta \\
&= d'(y,z) + \delta \\
d'(y,x) + d'(x,z) &> d'(y,z).
\end{align*}
\squishend

Therefore $d'$ satisfies the triangle inequality in all cases. 
Since $d'$ satisfy all properties of a metric on points in $U'$, we conclude that $(U',d')$ is a metric space.
\end{proof}

\begin{claim}\label{claim:one_per_group_sameopt}
An optimal solution of fair $k$-center with groups instance $I$ can be transformed into an optimal solution for one-per-group fair $k$-center instance $I'$ in polynomial time and vice versa. More strongly, the cost of optimal solutions is the same in both instances.
\end{claim}
\begin{proof}
 Let $\opt(I)$ and $\opt(I')$ denote the optimal solution of instances $I$ and $I'$, respectively. We show both inequalities will hold, \ie, $\opt(I') \leq \opt(I)$ and $\opt(I) \leq \opt(I')$, thus establishing the equality.

\squishlist
\item[($i$)] $\opt(I') \leq \opt(I)$. Let $S^\star$ be an optimal solution to $I$, so $|S^\star \cap G_i| = r_i$ for $i \in \{1,2\}$. Construct a feasible solution $\Swhat$ for $I'$ by assigning each chosen center $s \in S^\star \cap G_1$ to a distinct group $H^1_{a}$ and selecting the corresponding copy $s^1_{a}$; do the same for centers in $S^\star \cap G_2$ across the groups $H^2_{b}$. This selects exactly one center from each group in $\HH$, so $\Swhat$ is a feasible solution for the one-per-group fair $k$-center instance $I'$.

Now consider any client point $x \in U'$ with original point $u=\pi(x)$ and hence $u \in U$. Let $s \in S^\star$ be a closest center to $u \in U$. By construction, $\Swhat$ contains some copy $s'$ of $s$. If $u \neq s$, then
$d'(x,s') = d(u,s) \le \mathrm{cost}_{I}(S^\star)$. If $u=s$, then either $x=s'$ (distance $0$) or $x$ is a different copy of the same original point (distance $\delta$); in either
case $d'(x,s') \le \max\{\cost_{I}(S^\star),\delta\}$. Since $S^*$ is an optimal solution for $I$, we have
\begin{equation} \label{eq:one_per_group:max_cost}
\cost_{I'}(\Swhat) \le \max\{ \cost_{I}(S^\star), \delta\} = \max\{\opt(I),\delta\}.
\end{equation}
Finally, observe that finding an optimal solution for instances with $|U| \leq k$ is trivial, since choosing all points in $U$ suffices. So the \np-hard instance must be $k < |U|$. In this case, every feasible solution in $I$ has cost at least $\Delta > \delta$, so the optimal solution $\opt(I) \ge \Delta > \delta$. Hence
\begin{equation} \label{eq:one_per_group:opt_cost}
\max\{\cost_I(S^*),\delta)\} = \cost_I(S^*) \quad \text{and}\quad \max\{\opt(I),\delta\} = \opt(I),
\end{equation}
combining \Cref{eq:one_per_group:max_cost} with \Cref{eq:one_per_group:opt_cost} and the fact that $\opt(I') \leq \cost_{I'}(\Swhat)$ we obtain
\[
\opt(I') \le \cost_{I'}(\Swhat) \le \max\{\opt(I),\delta\} = \opt(I).
\]

\item[($ii$)] $\opt(I) \le \opt(I')$. Let $\Swhat$ be an optimal solution to $I'$, \ie, it chooses exactly one center from each group in $\HH$. Consider the multiset of original points $\pi(\Swhat) \subseteq U$ counted with multiplicity. Since $I'$ has exactly $r_i$ groups derived from $G_i$, the multiset $\pi(\Swhat)$ contains exactly $r_i$ elements from $G_i$ for each $i \in \{1,2\}$, counting multiplicity. Let $\Swcheck$ be the set of distinct original points appearing in $\pi(\Swhat)$. If $|\Swcheck \cap G_i| < r_i$ for some $i \in \{1,2\}$ due to duplicates, add arbitrary points from $G_i \setminus \Swcheck$ until $|\Swcheck \cap G_i|=r_i$. This yields a feasible solution for $I$, and adding centers does not increase the cost.

For a fixed original point $u \in U$, and choose any duplicate point $x \in U'$ with $\pi(x)=u$. Let $R := \cost_{I'}(\Swhat)$, by definition of $R$ there exists a center $\swhat \in \Swhat$ such that $d'(x,\swhat) \le R$. Let $\swcheck=\pi(y)$ be the original point corresponding to the duplicate center $\swhat$, so $\swcheck$ must be in the distinct centers $\Swcheck$ without duplicity, \ie, $\swcheck \in \Swcheck$.

If $u \neq \swhat$ then $d'(x,\swhat)=d(u,\swcheck)$, hence $d(u,\swcheck) \le R$. If $u=\swhat$ then $d(u,\swhat)=0 \le R$. Therefore every $u \in U$ is within
distance $R$ of $\Swcheck$, \ie,
\begin{equation}\label{eq:sameopt_2}
\mathrm{cost}_{I}(\Swcheck) \le R = \mathrm{cost}_{I'}(\Swhat) = \opt(I').
\end{equation}
Taking the minimum over feasible $\Swcheck$ in $I$ and considering the fact that $\opt(I) \leq \cost_I(\Swcheck)$, gives $\opt(I) \le \opt(I')$.
\squishend

Finally, by combining ($i$) and ($ii$) we conclude $\opt(I')=\opt(I)$.
\end{proof}

\begin{claim}\label{claim:one_per_group_alphaapx}
Any $\alpha$-approximation algorithm for the one-per-group fair $k$-center instance $I'$ yields an $\alpha$-approximation algorithm for the fair $k$-center with two groups instance $I$ (via the reduction described in \cref{claim:one_per_group_reduction}).
\end{claim}
\begin{proof}
Let $\Swhat$ be the set of centers returned by an $\alpha$-approximation algorithm for $I'$, so
\[
\cost_{I'}(\Swhat) \le \alpha \cdot \opt(I').
\]
Convert $\Swhat$ into a feasible solution $\Swcheck$ for $I$ using the procedure described in \Cref{claim:one_per_group_sameopt} point ($ii$), \ie, take $\Swcheck$ to be the set of distinct originals points in $\pi(\Swhat)$ and, if necessary, add arbitrary points within each group $G_i$ to satisfy $|\Swcheck \cap G_i|=r_i$. As shown in \Cref{eq:sameopt_2}, this conversion satisfies
\[
\cost_{I}(\Swcheck) \le \cost_{I'}(\Swhat).
\]
Therefore,
\begin{align*}
\cost_{I}(\Swcheck) &\le \cost_{I'}(\Swhat) \le \alpha \cdot \opt(I') = \alpha \cdot \opt(I),\\
\cost_{I}(\Swcheck) & \leq \alpha \cdot \opt(I),
\end{align*}
where the equality follows from \Cref{claim:one_per_group_sameopt} that establishes $\opt(I) = \opt(I')$. Hence $\Swhat$ is an $\alpha$-approximate solution to the fair $k$-center with two groups instance $I$.
\end{proof}

To conclude, for the sake of contradiction let us assume that there exists a polynomial-time $(3-\epsilon)$-approximation algorithm for the one-per-group fair $k$-center problem. By \Cref{claim:one_per_group_alphaapx}, this would imply a polynomial-time $(3-\epsilon)$-approximation algorithm for the fair $k$-center problem with two groups, contradicting \Cref{theorem:hardness_twogroups}. Therefore, unless $\p=\np$, no polynomial-time $(3-\epsilon)$-approximation algorithm exists for the one-per-group fair $k$-center problem.
\end{proof}

\section{Further Related Work} \label{sec:related}

Our work builds upon broad literature on (fair) clustering and the computational complexity of clustering problems. Since a comprehensive review is beyond the scope of our work, we focus on the results most closely related to our contributions and refer an interested reader to surveys on (fair) clustering~\citep{jain1999data, chhabra2021anoverview}.

Clustering is a fundamental problem in computer science, which has been extensively studied in both from theoretical and applied viewpoint~\cite{jain1988algorithms,vazirani2001approximation}. Among many clustering objectives, two well known center-based clustering formulations are $k$-center and $k$-supplier, both of which aim to minimize the maximum distance from each client to its nearest center. The key distinction lies in the set of admissible centers: in $k$-center, any data point can be chosen as a center, whereas, in $k$-supplier centers must be chosen from a designated subset of data points, referred as suppliers~\cite{gonzalez1985clustering,hochbaum1985best,hochbaum1986unified,shmoys1994computing,goyal2023tight}. Both problems are $\np$-hard~\citep{vazirani2001approximation}, yet they admit polynomial-time constant-factor approximation algorithms: $k$-center has a $2$-approximation~\citep[Theorem 2.2]{gonzalez1985clustering}, and $k$-supplier has a $3$-approximation~\citep[Theorem 5]{hochbaum1986unified}. Moreover, these guarantees are essentially best possible under standard complexity theory assumptions: unless $\p=\np$, for any $\epsilon>0$, there exists no polynomial-time $(2-\epsilon)$-approximation algorithm for $k$-center~\citep[Theorem 4.3]{gonzalez1985clustering} and no polynomial-time $(3-\epsilon)$-approximation algorithm for $k$-supplier~\citep[Theorem~6]{hochbaum1986unified}.
Furthermore, many variants of $k$-center (and $k$-supplier) remain $\np$-hard, and their inapproximability barriers often persist even under additional constraints. A variant most closely related to our work is $k$-center with costs, for which \citet{chuzoy2005asymmetric}[Theorem 4.1] prove $(3-\epsilon)$ inapproximability for every $\epsilon>0$, assuming $\p \neq \np$. This hardness result further extends to the closely related $k$-center problem with forbidden centers, where a subset of points are forbidden to be chosen as centers~\cite{angelidakis2021improved}. Our hardness of approximation results are established via a reduction from $k$-center with forbidden centers.

Algorithmic fairness has gained considerable attention in light of mounting evidence that optimization and learning systems, when designed without accounting for sensitive attributes and structural disparities in the data, can amplify or even create discriminatory outcomes~\cite{kleinberg2018algorithmic,baeza2018bias}. This has motivated a broader effort to revisit classical combinatorial optimization problems through the lens of fairness, by incorporating additional constraints that in turn require new algorithmic ideas across a range of settings~\cite{chierichetti2017fair,samadi2018price,matakos2024fair,kleindessner2019fair,gadekar2025capacitated,chen2024approximation,gadekar2025faircommittee,zehlike2022fairness}.
Within this broader agenda, fairness in clustering has emerged as a particularly active area, driven by the role of clustering in unsupervised machine learning and data summarization. Many notions of fairness have been proposed---capturing proportional representation within clusters or explicit constraints on which representatives are selected---leading to a rich algorithmic literature~\cite{ghadiri2021socially,hajiaghayi2012local,krishnaswamy2011matroid,gadekar2025capacitated}. In this work, we focus on cluster-center fairness: points are associated with demographic attributes forming groups and fairness constraints are imposed by restricting the number of centers chosen from each group, while optimizing standard objectives such as $k$-median, $k$-means, $k$-center, and $k$-supplier.
Existing formulations capture several types of center-selection constraints, including exact requirements~\citep{jones2020fair,kleindessner2019fair}, lower-bound constraints~\citep{thejaswi2021diversity,thejaswi2022clustering}, upper-bound constraints~\citep{hajiaghayi2012local,krishnaswamy2011matroid,chen2016matroid}, and combined upper-and-lower bound constraints~\citep{hotegni2023approximation,thejaswi2024diversity,gadekar2025capacitated}. Our focus in this paper is the fair $k$-center problem with exact constraints.

\citet{kleindessner2019fair} introduced the fair $k$-center problem in the context of data summarization, and gave a $(3\cdot 2^{t-1})$-approximation in $\bigO(nkt^2 + kt^4)$ time. This was was later improved to a factor-$3$ approximation with running time $\bigO(nk + n\sqrt{k}\log k)$ by \citet{jones2020fair}. More recently, \citet{gadekar2025fair} gave a tight $3$-approximation for fair $k$-supplier in $\bigO((kn + k^2 \sqrt{k})\log n \log k)$ time, matching the approximation factor of unconstrained $k$-supplier. All these results assume a setting in which the groups form a disjoint partition on the data points.
When groups are allowed to overlap, the problem becomes substantially more difficult: in general, even deciding feasibility (\ie, whether there exists any size-$k$ subset of data points that satisfies the fairness requirements for all groups) is $\np$-hard, which rules out any meaningful approximation guarantees for the clustering objective. Consequently, algorithmic work on intersecting groups typically focuses on parameterized regimes, for example, assuming that the number of centers and/or the number of groups is small or otherwise bounded, under which feasibility and optimization becomes tractable~\cite{gadekar2025fair,thejaswi2022clustering,thejaswi2024diversity,zhang2024parameterized}. In a recent work, \citet{gadekar2025capacitated} showed that for certain fair clustering formulations with intersecting groups, inapproximability can persist even when feasibility is trivial (\ie, when satisfying the fairness requirements is decidable in polynomial time). While their results are stated for fair range $k$-median (and $k$-means), it suggests a broader phenomenon that may also arise for fair $k$-center and fair $k$-supplier.

To the best of our knowledge, prior to our work there were no hardness of approximation results known for the fair $k$-center problem with disjoint groups, particularly in the non-degenerate regime where at least one center must be chosen from each group.

\section{Conclusions} \label{sec:conclusions}

In this work, we presented a comprehensive analysis of the computational complexity of the fair $k$-center problem in general metric spaces. We show that the known factor-$3$ approximation guarantee is essentially optimal: under standard complexity assumption $\p \neq \np$, it cannot be improved. Consequently, any better approximation ratio must rely on additional structure beyond arbitrary metrics---for example, by restricting attention to Euclidean or doubling metrics, where geometric properties can be exploited.
Our findings stand in sharp contrast with the closely related $k$-supplier objective: in that setting, both the fair and the unconstrained problem variants admit polynomial-time factor-$3$ approximation algorithms, which is not the case for $k$-center.

\medskip
\xhdr{Acknowledgments} We thank Aristides Gionis, Bruno Ordozgoiti, and Ameet Gadekar for insightful discussions on the computational complexity of fair clustering over the past several years, which have been an important inspiration for this work.

\bibliographystyle{ACM-Reference-Format}
\bibliography{refs}

\end{document}